\newtheorem{theorem}{Theorem}
\begin{document}
\graphicspath{{./Figures/}}
\title{On the Regimes in Millimeter wave Networks: Noise-limited or Interference-limited? }

\author{\IEEEauthorblockN{Solmaz Niknam$^\ast$ and Balasubramaniam Natarajan$^\dag$\\ }
\IEEEauthorblockA{$^\ast$$^\dag$Department of Electrical and Computer Engineering, Kansas State University, Kansas, USA\\
Email: \{$^\ast$slmzniknam, $^\dag$bala\}@ksu.edu\\}
\vspace{-0.5cm}
}
\maketitle
\begin{abstract}
Given the overcrowding in the 300~MHz--3~GHz spectrum, millimeter wave (mmWave) spectrum is a promising candidate for the future generations of wireless networks. With the unique propagation characteristics at mmWave frequencies, one of the fundamental questions to address is whether mmWave networks are noise or interference-limited. The regime in which the network operates significantly impacts the MAC layer design, resource allocation procedure and also interference management techniques. In this paper, we first derive the statistical characteristic of the cumulative interference in finite-sized mmWave networks considering configuration randomness across spatial and spectral domains while including the effect of blockages. Subsequently, using the derived interference model we set up a likelihood ratio test (LRT) (that is dependent on various network parameters) in order to detect the regime of the network from an arbitrarily located user standpoint. Unlike traditional networks, in mmWave networks, different likelihood of experiencing an interference-limited regime can be observed at different locations.
\end{abstract}

\section{Introduction} \label{sec:intro}
The availability of a large portion of millimeter wave (mmWave) spectrum has given rise to the idea that utilizing this chunk of spectrum may become a viable option in the next generation of wireless networks, e.g., 5G~\cite{Rappaport2013WillWork}. However, due to its challenging propagation characteristics including severe pathloss and strong atmospheric absorption, mmWave spectrum has been underutilized in mobile communication. Thanks to large antenna arrays that coherently direct the beam energy, highly directional signaling can help overcome the adverse mmWave propagation characteristics. However, utilization of directional beams changes many aspects of the wireless system design~\cite{Andrew2014what}. In fact, directional links are susceptible to blockages and obstacles~\cite{MacCartneyJr2016}. Highly narrow beams, large available bandwidth and high signal attenuation in mmWave spectrum may lead us to the conclusion that mmWave network performance is limited only by thermal noise (noise-limited regime). However, depending on the density of APs, density of the obstacles, transmission probability, and operating beamwidth, mmWave network performance may degrade due to interference (interference-limited regime). Unlike traditional wireless networks, mmWave networks may transit from a noise-limited regime to an interference-limited regime or exhibit intermediate behavior in which both regimes can be observed~\cite{measurment2016Mattia}. The regime in which the network is operating highly affects the MAC layer design and resource allocation strategies~\cite{mmWave2016Ghadikolahi}. Moreover, determining the network regime is critical in terms of identifying the most appropriate interference coordination technique that is effective in an interference-limited regime. However, when the network is in a noise-limited regime, we may not need any interference management mechanism or only a simple one may suffice. Therefore, one of the fundamental questions of interest in mmWave dense networks is whether the performance is limited by the interference or just by thermal noise.

There have been a few prior efforts focused on determining network regimes. \cite{dense2007ebrahimi,Adhoc2007nihal} have proposed conditions under which the network is noise or interference-limited. However, the density of the interfering APs are assumed to be fixed which may not be suitable for 5G mmWave networks that may exhibit uncertain spatial configurations due to factors like unplanned user-installed APs~\cite{5621983} and sensitivity to obstacles. In~\cite{measurment2016Mattia}, the network regime is determined, modeling the transmitter location as a Poisson point process (PPP). However, mmWave specifications such as severe pathloss and beam sensitivity to small-sized obstacles are not taken into consideration. The fact that interference power can change due to the presence of obstacles~\cite{intfmodel2016solmaz} limits the applicability of~\cite{measurment2016Mattia}.  In~\cite{mmWave2016Ghadikolahi}, the transition probability from a noise-limited to an interference-limited regime is calculated in a PPP mmWave network with random blockages. However, considering the spatial locations of the interfering APs as a PPP is not an appropriate choice for modeling finite-sized networks with fixed number of APs where performance becomes location dependent, as shown in~\cite{Haenggi2012stochBook}. Moreover, the blockage model used in~\cite{mmWave2016Ghadikolahi} is based on the unrealistic assumption of having a complete link outage with only one obstacle. However, in many practical mmWave applications such as indoor mmWave environments, outdoor mmWave small cells where coverage range is limited or even cases where terminals are equipped with larger number of antennas with wider beamwidths, more than one obstacle is needed to impact the power level, causing link blockage~\cite{Niu2015survey5G}.

In this paper, we take a systematic approach to determine the network regime in mmWave networks. In order to overcome the limitation of prior efforts, we consider a more realistic and appropriate network and blockage models upon which the regime identification is formulated as a hypothesis testing problem. Specifically, we detect whether an arbitrarily located user experiences a noise or interference-limited regime based on the received signal power distribution in the presence of arbitrary-sized blockages. We calculate the distributions of the signal-plus-noise and signal-plus-interference powers which serve as the null and alternate hypotheses, respectively. In order to calculate the interference power, a 2D Binomial point process (BPP)~\cite{Solmaz2017Finite} is assumed to account for the randomness of interfering APs configuration in both spatial and spectral domains in a finite area\footnote{BPP is an appropriate choice in order to model the node locations in finite-sized networks with a given number of nodes~\cite{haenggi2012stochastic}.}. In fact, we consider a grid structure of space-frequency locations where interfering APs are placed randomly based on a BPP. We also account for beam directionality by including the effect of presence of arbitrary-sized blockages in the environment using a more realistic blockage model. It is notable that, unlike~\cite{mmWave2016Ghadikolahi} and other works on blockage modeling~\cite{Bai2014Blockage,gupta2017macro,muller2017analyzing,venugopal2016millimeter,thornburg2016performance}, in this blockage model the net effect of partial blockage caused by each individual obstacle is calculated. Since, more than only one obstacle may cause complete link blockage. Moreover, due to the inherent complexity in evaluating the exact distribution, an approximation of the distribution under the alternate hypothesis (signal-plus-interference power) is calculated using the maximum entropy (ME) technique~\footnote{Based on the principle of ME~\cite{cover2012elements}, ME distribution is the least informative distribution subject to specified properties or measures. Intuitively speaking, it has the minimum amount of prior information built into the distribution.}. Subsequently, using the standard likelihood ratio test (LRT) based on a Neyman-Pearson (NP) framework, we determine the regime of the network. It is important to note that determining the regime of the network is highly impacted by the interference model that appropriately reflects the network specifications. Therefore, the purpose of this paper is to leverage the detailed statistical interference model and its relation to various key deployment parameters including access point density, blockage density, transmit power, bandwidth and antenna pattern to provide an accurate assessment of the regime of the mmWave networks. It is also shown that the likelihood of experiencing an interference-limited regime depends on the interferer and blockage densities and varies at different spatial locations.


\section{System Model} \label{sec:sys_model}
We consider a circular area of radius $R$ in 2D plane (${\rm I\!R}^2$) centered at the origin, with $N$ number of interfering APs operating in frequency band $[f_s,f_e]$. We also assume that a reference receiver, located at an arbitrary location ${v_0} \in B(O;R) = \left\{ {\left. x \in {\rm I\!R}^2 \right|\,\,{\left\| x \right\|_2} < R} \right\}$ with arbitrary frequency ${f_0} \in [f_s,f_e]$, is communicating with a reference transmitter over an intended communication link. This assumption gives the freedom of evaluating the network regime for users at different locations enabling more efficient resource management (e.g., interference coordination/cancellation only for those users whose performances are limited by interference). Interfering APs are distributed based on BPP in the space-frequency domain with success probability $p$. In other words, we consider a grid structure where the total $N$ interferers are randomly located at space-frequency locations based on a BPP\footnote{Reference transmitter-receiver pair is not a part of the point process.}. The overall received interference signal is the sum of the received signal from each interferer at a random space-frequency location. We also assume a random number of arbitrary-sized blockages in the environment distributed based on a PPP~\cite{venugopal2016millimeter,thornburg2016performance,muller2017analyzing,Bai2014Blockage} with parameter $\rho$. Due to the presence of the arbitrary blockages in the environment, the transmitted signal of interfering APs may be blocked and not all of the interfering APs contribute to the total received interference signal. Therefore, we are primarily concerned with the interferers that are in the line-of-sight (LoS) of the reference receiver.
\begin{figure}[t]
\centering
\includegraphics[scale=0.48]{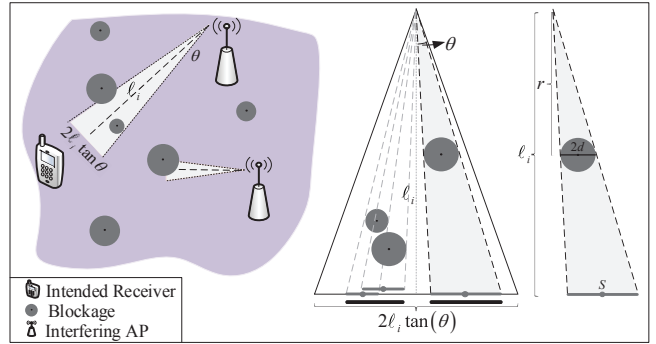}
\vspace{-1cm}
\caption{Radiation cone and the effective shadow of the blockages on the base of the radiation cone.}
\vspace{-0.5cm}
\label{fig:Shadow}
\end{figure}

In order to calculate the distribution of the number of active (non-blocked) interfering APs, we consider the blockage model presented in our prior work~\cite{Solmaz2017Finite}. In~\cite{Solmaz2017Finite}, the blockage effect is modeled by considering the net effect of partial blockage that each individual obstacle causes by intersecting the {interferers\textquotesingle} beam. In this model, obstacles are assumed to be modeled as circles with uniformly distributed radius $d$ in $[d_s,d_e]$. Assuming a radiation cone (see Fig.~\ref{fig:Shadow}) for the ${i^{{\rm{th}}}} \in \left\{ {1,2,...,N} \right\}$ interfering AP (where the edges are determined by the beamwidth of the signal, $2\theta$) we show that the average probability of each interfering AP being blocked corresponds to
\begin{align} \label{eq:blockage_prob}
{p_{\rm{b}}} =\frac{1}{{\frac{{{\mathbb E}\left[ d \right]}}{{2\tan \left( \theta  \right)}}}}{p_{{\rm{b1}}}} + \frac{1}{{{\mathbb E}\left[ \ell  \right] - \frac{{{\mathbb E}\left[ d \right]}}{{2\tan \left( \theta  \right)}}}}{p_{{\rm{b2}}}}.
\end{align}
Here, $p_{{\rm{b1}}}$ and $p_{{\rm{b2}}}$ are obtained using
\begin{align}\label{eq:pb1} \notag
&{p_{{\rm{b}}1}}=1 - \frac{{\sqrt {\frac{{\pi \tan (\theta )}}{\rho }} }}{{\left( {{d_e} - {d_s}} \right)}}\Bigg[ {\rm{erf}}( {{d_e}\sqrt {\frac{\rho }{{4\tan (\theta )}}} } ) \\
&\hspace{4cm} - {\rm{erf}}( {{d_s}\sqrt {\frac{\rho }{{4\tan (\theta )}}} })  \Bigg],
\end{align}
and
\begin{align} \label{eq:pb2}
{p_{{\rm{b}}2}} = \frac{{{{(1 + \Delta )}^{\left\lceil {\frac{\Delta }{{{{\rm{e}}^{\rho {\mathbb E}\left[ S \right]}} - 1}}} \right\rceil }}{{\rm{e}}^{ - (1 + \Delta )}}}}{{\left\lceil {\frac{\Delta }{{{{\rm{e}}^{\rho {\mathbb E}\left[ S \right]}} - 1}}} \right\rceil !}},
\end{align}
where $\Delta  = 2\rho {\mathbb{E}}\left[ \ell  \right]\tan \left( \theta  \right)$. Here, ${\mathbb{E}}[\ell]$ and ${\mathbb{E}}[S]$ denote the average distance from the interfering APs to the reference receiver and the average partial blockage caused by individual interfering APs, respectively. Given the BPP assumption of interfering nodes, the distribution of $\ell$ corresponds to
\begin{align} \label{eq:distance_distribution}
{f_L}\left( \ell  \right) {=} \left\{ \begin{array}{l}
\hspace{-0.15cm}\frac{{2\ell }}{{{R^2}}}\,\,\,\,\,\,\,\,\,\,\,\,\,\,\,\,\,\,\,\,\,\,\,\,\,\,\,\,\,\,\,\,\,\,\,\,\,\,\,\,\,\,\,\,\,\,\,\,\,\,\,\,\,\,\,\,\,\,\,\,\,\,\,\small{0 < \ell  \le R - \left\| {{v_0}} \right\|}\\
\hspace{-0.15cm}\frac{{2\ell {{\cos }^{ - 1}}\left( {\frac{{{{\left\| {{v_0}} \right\|}^2} - {R^2} +{\ell ^2}}}{{2\ell \left\| {{v_0}} \right\|}}} \right)}}{{\pi {R^2}}}\,\,\,\,\tiny{R - \left\| {{v_0}} \right\| < \ell  \le R + \left\| {{v_0}} \right\|}.
\end{array} \right.
\end{align}
In addition, the average partial blockage can be expressed as
\begin{align} \label{eq:Ave_S} \notag
{{\mathbb E}}\left[ S \right] &= {{\mathbb E}}\left[ {\frac{{2d\ell }}{r}\left| {d,r,\ell } \right.} \right] \\
&= \int\limits_{{d_s}}^{{d_e}} {\int\limits_{\frac{d}{{2\tan (\theta )}}}^{R + \left\| {{v_0}} \right\|} {\int\limits_{\frac{d}{{2\tan (\theta )}}}^\ell  {\frac{{2d\ell }}{r}{\rm{ }}{f_D}\left( d \right)f\left( {r,\ell } \right){\rm{d}}d\,{\rm{d}}r\,{\rm{d}}\ell } } }.
\end{align}
Detailed derivation of the blockage model is provided in our prior work~\cite{Solmaz2017Finite}.
\newtheorem{lemma}{Lemma}

Given the blockage probability in \eqref{eq:blockage_prob}, the distribution of the total number of non-blocked interfering APs is calculated using the following lemma:
\begin{lemma}\label{lem:non-blocked}
The total number of non-blocked interfering APs, denoted as $K$, is a Binomial random variable with success probability $p\left( {1 - {p_{\rm{b}}}} \right)$.
\end{lemma}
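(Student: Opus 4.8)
The plan is to combine two layers of randomness: the spatial–spectral placement of interferers (a BPP with success probability $p$) and the independent blocking of each placed interferer (with average probability $p_{\rm{b}}$ from \eqref{eq:blockage_prob}), and to show that the composition is again a Binomial with the product success probability $p(1-p_{\rm{b}})$. First I would recall that, by the BPP construction on the grid of $N$ space–frequency cells, the indicator that cell $i$ is occupied by an interferer is Bernoulli$(p)$, and these indicators are i.i.d. across $i\in\{1,\dots,N\}$; equivalently the number of interfering APs is Binomial$(N,p)$. Then, for each occupied cell, I would introduce the blocking indicator $\mathbb{1}_i$, which equals $1$ when the corresponding interferer is in LoS (non-blocked); by the blockage model of \cite{Solmaz2017Finite}, averaging over the obstacle PPP, the radii, and the geometry, $\Pr[\mathbb{1}_i=1]=1-p_{\rm{b}}$, and these events are (taken to be) independent across interferers and independent of the placement.

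Next I would define, for each cell $i$, the composite indicator $Y_i$ that equals $1$ iff cell $i$ is occupied \emph{and} the interferer there is non-blocked, so that $K=\sum_{i=1}^N Y_i$. The key computation is that $Y_i$ is Bernoulli with parameter $\Pr[Y_i=1]=\Pr[\text{cell }i\text{ occupied}]\cdot\Pr[\text{non-blocked}\mid\text{occupied}]=p(1-p_{\rm{b}})$, using independence of placement and blocking. Since the $Y_i$ inherit mutual independence from the independence of the placement indicators and of the blocking events, $K$ is a sum of $N$ i.i.d. Bernoulli$(p(1-p_{\rm{b}}))$ variables, hence Binomial with that success probability; one can make this explicit by conditioning on the number of occupied cells $M\sim\text{Binomial}(N,p)$, writing $K\mid M\sim\text{Binomial}(M,1-p_{\rm{b}})$, and using the standard thinning identity $\sum_{m}\binom{N}{m}p^m(1-p)^{N-m}\binom{m}{k}(1-p_{\rm{b}})^k p_{\rm{b}}^{m-k}=\binom{N}{k}\big(p(1-p_{\rm{b}})\big)^k\big(1-p(1-p_{\rm{b}})\big)^{N-k}$.

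The main obstacle is not the Binomial algebra, which is routine, but justifying the probabilistic independence that makes it go through: namely that the blocking events across distinct interferers can be treated as independent of one another and of the BPP placement, even though they are all driven by the \emph{same} obstacle field (a common PPP of blockages). I would address this by appealing to the averaged, per-interferer blockage probability $p_{\rm{b}}$ already derived in \eqref{eq:blockage_prob}–\eqref{eq:Ave_S} and to the modeling assumption (inherited from \cite{Solmaz2017Finite}) that correlations in shadowing between different beams are neglected, so that each interferer is independently retained with probability $1-p_{\rm{b}}$; this is the standard independent-thinning assumption used throughout the blockage-modeling literature cited in the paper. With that assumption in place, the conditioning/thinning argument above completes the proof that $K\sim\text{Binomial}\big(N,\,p(1-p_{\rm{b}})\big)$.
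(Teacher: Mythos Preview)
Your argument is correct and coincides in substance with the paper's: both reduce the claim to independent Bernoulli thinning of a Binomial$(N,p)$ placement by a per-interferer retention probability $1-p_{\rm b}$. The only cosmetic difference is that the paper packages the computation via probability generating functions, composing $G_N(z)=(1-p+pz)^N$ with $G_{K_i}(z)=(1-p_{\rm b})z+p_{\rm b}$ to obtain $G_K(z)=\big[1-p(1-p_{\rm b})+p(1-p_{\rm b})z\big]^N$, whereas you work directly with the composite cell indicators $Y_i$ (and equivalently with the conditional thinning identity); your explicit discussion of the independence assumption across blocking events is a welcome clarification that the paper leaves implicit.
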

\begin{proof}
Let $K = {K_1} + {K_2} + ... + {K_N}$, where $K_i$ is a Bernoulli random variable and equals $1$, if the $i^{\rm {th}}$ interfering AP is not blocked, and $0$, otherwise. Therefore, the probability generating function (PGF) of $K$ is given by
\begin{equation}\label{eq:PGF}
{G_K} = (1 - {p_{\rm{b}}}){\rm{z}} + {p_{\rm{b}}}.
\end{equation}
Subsequently, we have
\begin{align}\label{eq:PGF_nonBlocked}\notag
{G_K}\left( {\rm{z}} \right)&= {{\mathbb E}}\Big[ {{{\mathop{\rm z}\nolimits} ^{\sum\limits_{i = 1}^N {{K_{i}}} }}} \Big]= \sum\limits_{k \ge 0} {{{\left( {{{\mathbb E}}\left[ {{{\mathop{\rm z}\nolimits} ^{K}}} \right]} \right)}^k}p\left( {N = k} \right)}\\ \notag
&= {G_N}\left( {{G_{K_i}}({\mathop{\rm z}\nolimits} )} \right)= {\left[ {\left( {1 - p} \right) + p\left( {(1 - {p_{\rm{b}}}){\rm{z}} + {p_{\rm{b}}}} \right)} \right]^N} \\
&= {\left[ {1 - p\left( {1 - {p_{\rm{b}}}} \right) + p\left( {1 - {p_{\rm{b}}}} \right){\rm{z}}} \right]^N},
\end{align}
 which is the PGF of a Binomial random variable with success probability $p\left( {1 - {p_{\rm{b}}}} \right)$.
\end{proof}

Now, having the distribution of the number of active interferers, in lemma~\ref{lem:non-blocked}, we set up a hypothesis test in order to determine the regime of the network.

\section{Regime Classification} \label{sec:Regime_class}
In this section, we formulate a binary hypothesis test where regime detection decision is based on the received power at an arbitrary located receiver. This hypothesis test is formally defined as:
\begin{align} \label{eq:Hyp}
\begin{array}{l}
{H_0}:\,\,\,\,Y = {\varphi} + \cal N \quad \quad \text{(Noise-limited regime)}\\
{H_1}:\,\,\,\,Y = {\varphi} + I \quad \quad \text{(Interference-limited regime),}
\end{array}
\end{align}
where ${\varphi}$, $I$ and $\cal N$ denote the average received power of the desired signal, aggregated interference and noise powers, respectively. We assume that the signal power is known and noise is characterized by a Gaussian random variable with mean $0$ and variance ${\sigma_n}^2$. Here, under $H_0$ hypothesis, the reference receiver experiences a noise-limited environment. This case may happen when most of the interfering APs are blocked by the blockages in the environment and the received interference power is low enough that the thermal noise is dominant. Alternately, under hypothesis $H_1$, the performance is limited by the received interference power. The distributions of the received power under both hypotheses need to be identified in order to derive the test.
\subsection{Distribution under $H_0$} \label{subsec:Null_dist}
In this subsection, given the average received power of the desired signal, ${\varphi}$, we calculate the probability density function (PDF) of the received power under the null hypothesis.
\begin{lemma} \label{lem:H0}
The statistical distribution of the received power under $H_0$ is
\begin{align} \label{eq:H0}
{H_0}:\,\,\,\,Y \sim \frac{{{{\rm{e}}^{ - \,\,\frac{{y - {{\varphi}}}}{{2{\sigma ^2}}}}}}}{{2{\sigma ^2}\Gamma \left( {\frac{1}{2}} \right)\sqrt {\frac{{y - {{\varphi}}}}{{2{\sigma ^2}}}} }}
\end{align}
\end{lemma}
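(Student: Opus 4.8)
The plan is to read the alternative description of $H_0$ carefully: $Y$ is the sum of the \emph{deterministic} desired-signal power $\varphi$ and the \emph{random} instantaneous noise power $\mathcal{N}$, where the noise amplitude is zero-mean Gaussian with variance $\sigma^{2}$, so that $\mathcal{N}$ is the square of that amplitude (and $\mathbb{E}[\mathcal{N}]=\sigma^{2}$, consistent with the stated noise variance $\sigma_{n}^{2}$). Under this reading the claim is nothing more than the statement that the square of a zero-mean Gaussian is a scaled chi-square variate with one degree of freedom, translated by the constant $\varphi$. I would therefore proceed in three short steps: (i) derive the law of $\mathcal{N}$ via the squaring map, (ii) translate it by $\varphi$, and (iii) rewrite the normalizing constant so that it coincides with the expression in \eqref{eq:H0}.

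For step (i), put $\mathcal{N}=W^{2}$ with $W\sim\mathcal{N}(0,\sigma^{2})$. For $n\ge 0$,
\[
F_{\mathcal{N}}(n)=\Pr\!\big(W^{2}\le n\big)=\Pr\!\big(-\sqrt{n}\le W\le \sqrt{n}\big)=2\Phi\!\Big(\tfrac{\sqrt{n}}{\sigma}\Big)-1,
\]
and differentiating (chain rule with Jacobian $\tfrac{1}{2\sqrt{n}}$, the factor $2$ accounting for the two preimages $\pm\sqrt{n}$) gives
\[
f_{\mathcal{N}}(n)=\frac{1}{\sigma\sqrt{n}}\,\phi\!\Big(\tfrac{\sqrt{n}}{\sigma}\Big)=\frac{e^{-n/(2\sigma^{2})}}{\sqrt{2\pi}\,\sigma\sqrt{n}},\qquad n>0,
\]
and $0$ otherwise --- i.e.\ a Gamma density of shape $\tfrac12$ and scale $2\sigma^{2}$, equivalently $\sigma^{2}$ times a $\chi^{2}_{1}$ random variable. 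Alternatively one may simply quote the $\chi^{2}_{1}$ density directly and avoid writing $\Phi$ at all.

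For steps (ii)--(iii): since $\varphi$ is a known constant, $Y=\varphi+\mathcal{N}$ has density $f_{Y}(y)=f_{\mathcal{N}}(y-\varphi)$ supported on $y>\varphi$. It then only remains to recognize that $\sqrt{2\pi}\,\sigma\sqrt{y-\varphi}=2\sigma^{2}\,\Gamma(\tfrac12)\,\sqrt{\tfrac{y-\varphi}{2\sigma^{2}}}$, which follows from $\Gamma(\tfrac12)=\sqrt{\pi}$ together with the identity $2\sigma^{2}\sqrt{(y-\varphi)/(2\sigma^{2})}=\sqrt{2}\,\sigma\sqrt{y-\varphi}$; substituting this into the density above reproduces \eqref{eq:H0} exactly. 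There is no genuine obstacle here: the only points that require care are the non-injectivity of the squaring map (the factor $2$ from the two preimages and the Jacobian) and fixing, once and for all, the identification $\sigma^{2}\equiv\sigma_{n}^{2}$ and the reading of $\mathcal{N}$ as noise \emph{power} rather than noise amplitude; beyond that the ``proof'' reduces to the cosmetic rearrangement of constants needed to cast the scaled-$\chi^{2}_{1}$ density into the Gamma form written in the lemma.
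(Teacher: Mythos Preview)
Your argument is correct and matches the paper's own proof: both recognize that under $H_0$ the noise power is $\sigma^{2}\chi_1^{2}$ and then shift by the constant $\varphi$. The only cosmetic difference is that the paper writes the CDF via the lower incomplete gamma function $\gamma(\tfrac12,\cdot)$ and differentiates, whereas you obtain the density directly from the squaring change of variables; the content is identical.
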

\begin{proof}
Since noise is assumed to be Gaussian with mean $0$ and variance ${\sigma_n}^2$, the distribution of the noise power is ${\cal N} \sim {\sigma_n}^2\chi _1^2$, where $\chi _1^2$ denotes a chi-squared distribution with $1$ degree of freedom. Consequently, the cumulative density function (CDF) of the power is given by
\begin{align}
{F_Y}\left( {Y \le y} \right) = {F_Y}\left( {{\cal N} \le y - {{\varphi}}} \right) = \,\frac{1}{{\Gamma \left( {\frac{1}{2}} \right)}}\gamma \left( {\frac{1}{2},\frac{{y - {{\varphi}}}}{{2{\sigma ^2}}}} \right).
\end{align}
Then, by taking derivative of the CDF, the PDF of the power under the null hypothesis is determined as
\begin{align} \notag
{\left. {{f_Y}(y)} \right|{H_0}} &= \frac{{\rm{d}}}{{{\rm{d}}y}}\left\{ {\frac{1}{{\Gamma \left( {\frac{1}{2}} \right)}}\gamma \left( {\frac{1}{2},\frac{{y - {{\varphi}}}}{{2{\sigma ^2}}}} \right)} \right\} \\ \notag
& = \frac{{\rm{d}}}{{{\rm{d}}y}}\left\{ {\frac{1}{{\Gamma \left( {\frac{1}{2}} \right)}}\int\limits_0^{\frac{{y - {{\varphi}}}}{{2{\sigma ^2}}}} {{t^{\frac{1}{2} - 1}}{{\rm{e}}^{ - t}}{\rm{d}}t} } \right\} \\
&= \frac{{{{\rm{e}}^{ - \,\,\frac{{y - {{\varphi}}}}{{2{\sigma ^2}}}}}}}{{2{\sigma ^2}\Gamma \left( {\frac{1}{2}} \right)\sqrt {\frac{{y - {{\varphi}}}}{{2{\sigma ^2}}}} }}.
\end{align}
\end{proof}

\subsection{Distribution under $H_1$} \label{subsec:Alt_dist}
In this subsection, we calculate the distribution of the received power in the interference-limited regime where the power of the noise is negligible. Therefore, the power received by an arbitrarily-located reference receiver is
\begin{align} \label{eq:received_power}
Y={\varphi}+{\sum\limits_{i = 1}^K {{{\cal P}_{{I_i}}}} },
\end{align}
where $\mathcal{P}_{I_i}$ is the effective received interference power from the $i^\text{th}$ interfering AP at the output of the matched filter which corresponds to~\cite{Hamdi2009unified},
\begin{align} \label{eq:Intf_pow}
{{{\cal P}_{{I_i}}}}={q_i}{h_i}{\left\| {\ell_i} \right\|^{ - \alpha }}\Upsilon \left( {\omega_i} \right).
\end{align}
Here, $h_i$ and ${\left\| . \right\|^{ - \alpha }}$ model the Nakagami-$m$ small scale fading and pathloss effects, respectively. $\ell_i={v_0} - {v_i}$ and $\omega_i={f_i}-{f_0}$ denote the spatial and spectral distance between the $i^{\rm{th}}$ interfering AP and the reference receiver, respectively. $q_i$ is the transmitted power of the $i^{\rm{th}}$ interfering AP.  Moreover, $\Upsilon \left( {{\omega_i}} \right)$ is defined as
\begin{align}
\Upsilon(\omega_i)=\int\limits_{{f_0} - \frac{W}{2}}^{{f_0} + \frac{W}{2}} {\Phi \left( {f - {f_i}} \right){{\left| {H\left( {f - {f_0}} \right)} \right|}^2}{\mathop{\rm d}\nolimits} f} ,
\end{align}
where $H(f-f_0)$ is the transfer function of the matched filter at the reference receiver with arbitrary frequency $f_0$, and $\Phi(f-f_i)$ is the power spectral density of the baseband equivalent of the interferers signals. Considering~\eqref{eq:Intf_pow}, as ${\left\| {\ell_i} \right\|^{ - \alpha }}$ captures the impact of spatial distances (and thereby random spatial configuration), $\Upsilon(\omega_i)$ accounts for the effect of frequency separation (and thereby random spectral configuration) in the interference power. The statistical distribution of the received signal power in the alternate hypothesis $H_1$, in terms of MGF, is obtained using the following theorem:
\begin{theorem} \label{theo:theo1}
The moment generating function (MGF) of $Y$, under alternate hypothesis $H_1$, is given by
\begin{align} \label{eq:MGF_alt_theo}
{M_{{Y}}}({\rm{s}}) ={{\rm{e}}^{{{\varphi}}{\rm{s}}}}{\left[ {1 - p\left( {1 - {p_{\rm{b}}}} \right) + p\left( {1 - {p_{\rm{b}}}} \right){M_{{{{\cal P}_{{I_i}}}}}\left( \rm{s} \right)}} \right]^N},
\end{align}
where,
\begin{align} \label{eq:MGF_indivi}
&{M_{{{\cal P}_{{I_i}}}}}\left( {\rm{s}} \right){=} \sum\limits_{n = 0}^\infty  {\frac{{{(q\, {\rm s})^n}}}{{n!}}\frac{{{m^{ - n}}\Gamma \left( {n + m} \right)}}{{\Gamma \left( m \right)}}\frac{{2\gamma_n \left( {{f_s},{f_e}} \right)\kappa_n \left( {R,{v_0}} \right)}}{{{R^2}\left( {{f_e} - {f_s}} \right)}}}  .\\ \notag
\end{align}
Here, ${M_{{{\cal P}_{{I_i}}}}}\left( {\rm{s}} \right)$ is the MGF of the $i^ {\rm{th}}$ interferer's power and
\begin{align}
{\gamma_n \left( {{f_s},{f_e}} \right)}=\hspace{-0.3cm}{\int\limits_0^{\min \left( {\left| {{\omega _e}} \right|,\left| {{\omega _s}} \right|} \right)} \hspace{-0.3cm}{\Upsilon {{\left( \omega  \right)}^n}{\rm{d}}\omega }  +\hspace{-0.3cm} \int\limits_0^{\max \left( {\left| {{\omega _e}} \right|,\left| {{\omega _s}} \right|} \right)} \hspace{-0.3cm}{\Upsilon {{\left( \omega  \right)}^n}{\rm{d}}\omega } },
\end{align}
and
\begin{align} \notag
{\kappa_n \left( {R,{v_0}} \right)}&=\hspace{-0.1cm}\int\limits_0^{R - \left\| {{v_0}} \right\|} {\hspace{-0.3cm}{\ell ^{ - n\alpha  + 1}}\,{\rm{d}}\ell \,\,} \\
&+\hspace{-0.1cm} \int\limits_{R - \left\| {{v_0}} \right\|}^{R + \left\| {{v_0}} \right\|} {\hspace{-0.2cm}\frac{{{\ell ^{ - n\alpha  + 1}}}}{\pi }{{\cos }^{ - 1}}\left( {\frac{{{{\left\| {{v_0}} \right\|}^2} - {R^2} + {\ell ^2}}}{{2\ell \left\| {{v_0}} \right\|}}} \right)\,{\rm{d}}\ell } .
\end{align}
\end{theorem}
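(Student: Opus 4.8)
The plan is to compute the MGF of $Y = \varphi + \sum_{i=1}^K \mathcal{P}_{I_i}$ by conditioning on $K$, exploiting the fact that the interference powers $\mathcal{P}_{I_i}$ are i.i.d.\ across $i$ and independent of $K$. First I would write $M_Y(\mathrm{s}) = \mathbb{E}[\mathrm{e}^{\mathrm{s}Y}] = \mathrm{e}^{\varphi \mathrm{s}}\,\mathbb{E}[\mathrm{e}^{\mathrm{s}\sum_{i=1}^K \mathcal{P}_{I_i}}]$, and then use the standard random-sum (compound distribution) identity: conditioning on $K=k$ gives $\mathbb{E}[\mathrm{e}^{\mathrm{s}\sum_{i=1}^k \mathcal{P}_{I_i}}] = M_{\mathcal{P}_{I_i}}(\mathrm{s})^k$, so that the unconditional expectation is the PGF of $K$ evaluated at $\mathrm{z} = M_{\mathcal{P}_{I_i}}(\mathrm{s})$. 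By Lemma~\ref{lem:non-blocked}, $K$ is Binomial with success probability $p(1-p_{\mathrm{b}})$, whose PGF is $\left[1 - p(1-p_{\mathrm{b}}) + p(1-p_{\mathrm{b}})\mathrm{z}\right]^N$. Substituting $\mathrm{z} = M_{\mathcal{P}_{I_i}}(\mathrm{s})$ and multiplying by $\mathrm{e}^{\varphi \mathrm{s}}$ yields \eqref{eq:MGF_alt_theo} directly. This part is essentially bookkeeping once independence is invoked.

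The substantive work is deriving the closed form \eqref{eq:MGF_indivi} for $M_{\mathcal{P}_{I_i}}(\mathrm{s})$, the MGF of a single interferer's power $\mathcal{P}_{I_i} = q_i h_i \|\ell_i\|^{-\alpha}\Upsilon(\omega_i)$. I would proceed by expanding the exponential as a power series, $\mathrm{e}^{\mathrm{s}\mathcal{P}_{I_i}} = \sum_{n\ge 0} \frac{\mathrm{s}^n}{n!}\mathcal{P}_{I_i}^n$, and taking expectations term by term, so that $M_{\mathcal{P}_{I_i}}(\mathrm{s}) = \sum_{n\ge 0} \frac{\mathrm{s}^n}{n!}\mathbb{E}[\mathcal{P}_{I_i}^n]$. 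The $n$-th moment factorizes over the independent ingredients: $\mathbb{E}[\mathcal{P}_{I_i}^n] = q^n\,\mathbb{E}[h_i^n]\,\mathbb{E}[\|\ell_i\|^{-n\alpha}]\,\mathbb{E}[\Upsilon(\omega_i)^n]$. For the Nakagami-$m$ fading, $h_i$ is Gamma-distributed with unit mean and shape $m$, so $\mathbb{E}[h_i^n] = m^{-n}\Gamma(n+m)/\Gamma(m)$, which supplies the middle factor in \eqref{eq:MGF_indivi}. The spatial moment $\mathbb{E}[\|\ell_i\|^{-n\alpha}] = \int \ell^{-n\alpha} f_L(\ell)\,\mathrm{d}\ell$ is computed against the BPP distance density \eqref{eq:distance_distribution}; pulling the factor $2\ell/R^2$ (resp.\ the arccos-weighted density) out and absorbing the two integration regimes gives exactly $\frac{2}{R^2}\kappa_n(R,v_0)$. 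Similarly, assuming the interferer's frequency offset $f_i$ (hence $\omega_i = f_i - f_0$) is uniform on $[f_s,f_e]$ with density $1/(f_e-f_s)$, the spectral moment $\mathbb{E}[\Upsilon(\omega_i)^n] = \frac{1}{f_e-f_s}\int \Upsilon(\omega)^n\,\mathrm{d}\omega$ splits, after a change of variables and accounting for the sign/ordering of the endpoints $\omega_s = f_s - f_0$ and $\omega_e = f_e - f_0$, into the two integrals defining $\gamma_n(f_s,f_e)$. Collecting the four factors and the prefactor $1/[R^2(f_e-f_s)]$ (up to the factor of $2$ carried by $\kappa_n$) reproduces \eqref{eq:MGF_indivi}.

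The main obstacle I anticipate is twofold. First, the termwise interchange of expectation and the infinite series needs justification — convergence of $\sum_n \frac{\mathrm{s}^n}{n!}\mathbb{E}[\mathcal{P}_{I_i}^n]$ on a neighborhood of $\mathrm{s}=0$, which hinges on $\Upsilon(\omega)$ being bounded (it is, being a finite integral of a bounded matched-filter response times a PSD over a band of width $W$) and on the spatial integral $\kappa_n$ being finite; the latter is delicate near $\ell = 0$ when $R - \|v_0\| > 0$, since $\ell^{-n\alpha+1}$ is integrable at the origin only for $n\alpha < 2$, so in practice one either restricts to an exclusion region around the receiver or treats the MGF as a formal/asymptotic series — I would flag which convention the paper adopts. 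Second, correctly handling the geometry of the spectral integration domain: because $f_0$ is arbitrary in $[f_s,f_e]$, the offsets $\omega_s$ and $\omega_e$ have opposite signs, and folding the integral of $\Upsilon(\omega)^n$ over $[\omega_s,\omega_e]$ into two one-sided integrals over $[0,|\omega_s|]$ and $[0,|\omega_e|]$ (rather than a single integral to $\max$ or $\min$) is the step where a careless split would introduce an error; this is why both a $\min$ and a $\max$ term appear in $\gamma_n$, and I would verify the decomposition by checking the boundary cases $f_0 = f_s$ and $f_0 = f_e$.
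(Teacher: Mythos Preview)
Your proposal is correct and follows essentially the same approach as the paper: factor out $\mathrm{e}^{\varphi\mathrm{s}}$, reduce the random sum to the PGF of $K$ evaluated at $M_{\mathcal{P}_{I_i}}(\mathrm{s})$ via Lemma~\ref{lem:non-blocked}, and then obtain \eqref{eq:MGF_indivi} by expanding the exponential in a power series and integrating term by term against the Nakagami-$m$ fading law, the spatial distance density \eqref{eq:distance_distribution}, and the spectral distance density. The only cosmetic difference is that the paper invokes a pre-stated folded density $f_\Omega(\omega)$ for the spectral offset (equation \eqref{eq:frequency_distribution}) rather than starting from uniformity of $f_i$ and folding, and it does not discuss the convergence caveat you raise about $\kappa_n$ near $\ell=0$; your treatment is otherwise the same computation in the same order.
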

\begin{proof}
In order to calculate the MGF of the received signal power under alternate hypothesis, we have
\begin{align} \label{eq:MGF_alt_det} \notag
{M_{{Y}}}({\rm{s}}) &= {{\mathbb E}}\left[ {{{\rm{e}}^{{\rm{s}}\left( {{{\varphi}} + \sum\limits_{i = 1}^K {{{\cal P}_{{I_i}}}} } \right)}}} \right]{\rm{ = }}{{\mathbb E}}\left[ {{{\rm{e}}^{{\rm{s}}{{\varphi}} + {\rm{s}}\sum\limits_{i = 1}^K {{{\cal P}_{{I_i}}}} }}} \right] \\ \notag
&{\rm{ = }}{{\rm{e}}^{{{\varphi}}{\rm{s}}}}{{\mathbb E}}\left[ {{{\rm{e}}^{{\rm{s}}\sum\limits_{i = 1}^K {{{\cal P}_{{I_i}}}} }}} \right] = {{\rm{e}}^{{{\varphi}}{\rm{s}}}}\sum\limits_{k \ge 0} {{{\left( {{{\mathbb E}}\left[ {{{\rm{e}}^{{\rm{s}}{{\cal P}_{{I_i}}}}}} \right]} \right)}^k}p\left( {K = k} \right)}   \\ \notag
&= {{\rm{e}}^{{{\varphi}}{\rm{s}}}}{G_K}\left( {M_{{{{\cal P}_{{I_i}}}}}\left( \rm{s} \right)} \right)\\
&= {{\rm{e}}^{{{\varphi}}{\rm{s}}}}{\left[ {1 - p\left( {1 - {p_{\rm{b}}}} \right) + p\left( {1 - {p_{\rm{b}}}} \right){M_{{{{\cal P}_{{I_i}}}}}\left( \rm{s} \right)}} \right]^N}.
\end{align}
where ${M_{{{\cal P}_{{I_i}}}}}\left( {\rm{s}} \right)$ is the MGF of the $i^ {\rm{th}}$ interferer's power. The MGF of the power of the individual interfere is calculated as
\begin{align} \label{eq:MGF_indivi_det} \notag
{M_{{{\cal P}_{{I_i}}}}}\left( {\rm{s}} \right) &= {{\mathbb E}}\left[ {{{\rm{e}}^{{\rm{s}}\,q\,h\,{{ \ell  }^{ - \alpha }}\Upsilon \left( \omega  \right)}}} \right] \\ \notag
&= \int\limits_0^\infty  \int\limits_0^{R + \left\| {{v_0}} \right\|} \int\limits_0^{\max \left( {\left| {{\omega _e}} \right|,\left| {{\omega _s}} \right|} \right)} {{\rm{e}}^{{\rm{s}}\,q\,h\,{{ \ell  }^{ - \alpha }}\Upsilon \left( \omega  \right)}}\\
&\hspace{3 cm} \times \,\,{f_\Omega }\left( \omega  \right){f_L}\left( \ell  \right)f\left( h \right){\rm{d}}\omega \,{\rm{d}}\ell \,{\rm{d}}h.
\end{align}
Given the BPP assumption of the location of interferer in space-frequency domain, the distributions of spectral distance is given by~\cite{Solmaz2017Finite}
\begin{align}\label{eq:frequency_distribution}
&{f_\Omega }\left( \omega  \right) {=} \left\{ \begin{array}{l}
\frac{2}{{{f_e} - {f_s}}}\,\,\,\,\,\,\,\,\,\,\,\,\,\,0 < \omega  \le \min \left( {\left| {\omega_{e}} \right|,\left| {\omega_{s}} \right|} \right)\\
\frac{1}{{{f_e} - {f_s}}}\,\,\,\,\min \left( {\left| {\omega_{e}} \right|,\left| {\omega_{s}} \right|} \right) < \omega  \le \max \left( {\left| {\omega_{s}} \right|,\left| {\omega_{s}} \right|} \right),
\end{array} \right.
\end{align}
where $\omega_{e}={f_e} - {f_0}$ and $\omega_{s}={f_s} - {f_0}$. Having the spatial and spectral distance distributions given in~\eqref{eq:distance_distribution} and~\eqref{eq:frequency_distribution}, Nakagami-$m$ assumption of the small scale fading, i.e., $h$ and using the polynomial expansion of the exponential function, the integral in~\eqref{eq:MGF_alt_det} is derived as in~\eqref{eq:MGF_indivi}. Subsequently, by substituting~\eqref{eq:MGF_indivi} in~\eqref{eq:MGF_alt_det}, the result in~\eqref{eq:MGF_alt_theo} is obtained.
\end{proof}

Calculating the inverse Laplace transform of the MGF in~\eqref{eq:MGF_alt_theo} to find the distribution is a tedious task and computationally complex. A more straightforward method might be to approximate the given distribution with known distributions. Therefore, in order to make the problem tractable, we use the ME method. Specifically, the ME technique~\cite{cover2012elements} is used to approximate the distribution of the received power under the alternate hypothesis with a tractable and simpler form. Basically, ME estimate is an estimate with maximal information entropy (least-informative) subject to the given moments. All the information about the interference power distribution is provided by the MGF in~\eqref{eq:MGF_alt_theo}. Therefore, we can use as many moments as needed (as the prior information or constraints) to make the estimation more precise. However, considering higher number of moments leads to the calculation of sets of non-linear equations which itself adds complexity to the problem. Here, for simplicity, we use the first moment of the received power (mean received power) as the constraint while maximizing the entropy of the distribution. We believe that this is a reasonable and logical starting point as the difference in mean power of interference and noise should offer the greatest discriminatory effect between $H_0$ and $H_1$. Later, Section~\ref{sec:Simulation}, we show the performance of the resulting test using receiver operating characteristic (ROC) curve and it can been considered as the lower bound on the performance of the ideal test with the true distribution under $H_1$. In fact, the test performance can be improved by including the higher order moments as part of the ME estimation constraints at the cost of increasing model complexity. This will be explored as part of our future work.

\begin{lemma}\label{lem:H1}
The approximated PDF of the received signal power under alternate hypothesis $H_1$ is given by
\begin{align} \label{eq:H1_approx}
{H_1}:{\kern 1pt} {\kern 1pt} {\kern 1pt} {\kern 1pt} Y\sim \lambda {{\rm{e}}^{ - \lambda \left( {y - {{\varphi}}} \right)}},
\end{align}
where $\lambda$ is derived by solving $\left( {\lambda {{\varphi}} + 1} \right){{\rm{e}}^{ - \lambda {{\varphi}}}} - {\mathbb E}\left[ y \right]{\lambda ^2} = 0$ and ${\mathbb E}\left[ y \right]$ is the first moment of the received signal power under the alternate hypothesis.
\end{lemma}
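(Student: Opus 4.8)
The plan is to apply the maximum-entropy (ME) principle already invoked in the paper: among all probability densities supported on the admissible range of $Y$ under $H_1$ and matching a prescribed first moment, the one of largest differential entropy has an exponential form, and its single free parameter is then fixed by the moment constraint. I would first pin down the support. From \eqref{eq:received_power}, $Y = {\varphi} + \sum_{i=1}^{K}{\cal P}_{I_i}$, and each ${\cal P}_{I_i}$ in \eqref{eq:Intf_pow} is a product of a nonnegative transmit power, a nonnegative Nakagami-$m$ power gain, a positive path-loss factor and a nonnegative $\Upsilon(\omega_i)$; hence $\sum_i {\cal P}_{I_i}\ge 0$ and $Y$ lives on $[{\varphi},\infty)$. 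Second, I would record that the only datum the ME estimate may use is the mean $\mathbb{E}[y]$, which Theorem~\ref{theo:theo1} supplies in closed form as $\tfrac{\mathrm d}{\mathrm{d}\mathrm s}M_Y(\mathrm s)\big|_{\mathrm s=0}$ obtained by differentiating \eqref{eq:MGF_alt_theo}; from here on it is treated as a known constant.

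Next I would set up the constrained variational problem: maximize $-\int_{{\varphi}}^{\infty} f(y)\ln f(y)\,\mathrm{d}y$ subject to $\int_{{\varphi}}^{\infty} f(y)\,\mathrm{d}y = 1$ and $\int_{{\varphi}}^{\infty} y\,f(y)\,\mathrm{d}y = \mathbb{E}[y]$. Forming the Lagrangian with multipliers and setting the first variation in $f$ to zero gives $\ln f(y) = -\mu_0 - \mu_1 y$, i.e.\ $f(y) = c\,\mathrm{e}^{-\mu_1 y}$ on $[{\varphi},\infty)$, with integrability over the semi-infinite support forcing $\mu_1>0$. Enforcing normalization over $[{\varphi},\infty)$ fixes $c$ in terms of $\mu_1$ and, writing $\lambda := \mu_1$, produces exactly the shifted-exponential density $\lambda\,\mathrm{e}^{-\lambda(y-{\varphi})}$ of \eqref{eq:H1_approx}. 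Finally, substituting this density into the first-moment constraint and performing the elementary integral over $[{\varphi},\infty)$ reduces the problem to a single scalar equation in $\lambda$ — the transcendental relation stated in the lemma — and I would note that its left-hand side is monotone on $(0,\infty)$ and sign-changing, so the root $\lambda>0$ exists and is unique.

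The calculus-of-variations step and the two integrals are routine; the delicate point — and the main obstacle — is the bookkeeping forced by the nonzero lower endpoint ${\varphi}$: the normalization and first-moment integrals must both be taken over $[{\varphi},\infty)$ rather than $[0,\infty)$, and it is precisely how the resulting normalization factor is carried through the moment constraint that yields the particular algebraic form of the $\lambda$-equation in \eqref{eq:H1_approx}. I would close with a one-line concavity remark — $t\mapsto -t\ln t$ is strictly concave, so the stationary density is the unique global maximizer over the constraint set — certifying that \eqref{eq:H1_approx} is the ME density and not merely a critical point.
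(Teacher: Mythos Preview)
Your overall plan mirrors the paper's: both invoke the maximum-entropy principle with a Lagrangian/KKT argument to arrive at a shifted exponential on $[\varphi,\infty)$, and then fix the single parameter via the first-moment constraint. The paper's proof is terser (it writes only the moment constraint explicitly in its Lagrangian and appeals to KKT), whereas you spell out the support, the normalization constraint, and the concavity certification; those additions are sound and make the argument cleaner.

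There is, however, a genuine gap in your final step. You assert that substituting the \emph{normalized} density $f(y)=\lambda\,\mathrm{e}^{-\lambda(y-\varphi)}$ into the first-moment constraint ``reduces the problem to \ldots\ the transcendental relation stated in the lemma.'' It does not: for this density one computes directly
\[
\int_{\varphi}^{\infty} y\,\lambda\,\mathrm{e}^{-\lambda(y-\varphi)}\,\mathrm{d}y \;=\; \varphi + \frac{1}{\lambda},
\]
so the moment match yields the explicit, non-transcendental relation $\lambda = 1/(\mathbb{E}[y]-\varphi)$, not $(\lambda\varphi+1)\mathrm{e}^{-\lambda\varphi}-\mathbb{E}[y]\lambda^{2}=0$. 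The paper's stated equation instead arises if the moment constraint is imposed on the \emph{pre-normalization} exponential form $\mathrm{e}^{-\lambda y}$ on $[\varphi,\infty)$, since
\[
\int_{\varphi}^{\infty} y\,\mathrm{e}^{-\lambda y}\,\mathrm{d}y \;=\; \frac{(\lambda\varphi+1)\,\mathrm{e}^{-\lambda\varphi}}{\lambda^{2}}.
\]
In other words, your ``delicate point'' about the bookkeeping at the lower endpoint is exactly where your route diverges from the lemma: normalizing first and then matching the mean --- the standard ME recipe you describe --- cannot reproduce the equation as written. To actually establish the lemma you must either follow the paper's order of operations (solve the stationarity condition, then impose the moment on the raw exponential before folding in the normalization constant) or else note explicitly that the two $\lambda$-equations are not equivalent.
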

\begin{proof}
Considering the first moment as the constraint in the ME method, the PDF of the received signal power can be calculated by solving
\begin{align} \label{eq:opt_problm}
\begin{array}{l}
\max \,\,\,\,\,\, - {f_Y}\left( y \right)\ln \left( {{f_Y}\left( y \right)} \right)\\
{\rm{s}}{\rm{.t}}{\rm{. }}\,\,\,\,\,\,\,\int\limits_{{{\varphi}}}^\infty  {y\,{f_Y}\left( y \right)}  = {\mathbb E}\left[ y \right].
\end{array}
\end{align}
Here, ${\mathbb E}\left[ y \right]={\left. {\frac{\partial }{{\partial {\rm{s}}}}{{\rm{M}}_Y}\left( {\rm{s}} \right)} \right|_{{\rm{s}} = 0}}$. The ME probability is found using the dual Lagrangian method~\cite{cover2012elements},
\begin{align} \label{eq:Lagrangian_derivative}
\frac{\partial }{{\partial {f_Y}\left( y \right)}}L\left( {{f_Y}\left( y \right),\lambda } \right) = 0,
\end{align}
where,
\begin{align} \notag
&L\left( {{f_Y}\left( y \right),\lambda } \right)=  - {f_Y}\left( y \right)\ln \left( {{f_Y}\left( y \right)} \right)\\
 &\hspace{2cm} + \lambda \left( {\int\limits_{{{\varphi}}}^\infty  {y\,{f_Y}\left( y \right)}  - {\mathbb E}\left[ y \right]} \right),
\end{align}
is the Lagrangian of the optimization problem~\eqref{eq:opt_problm}. By solving~\eqref{eq:Lagrangian_derivative} using the Karush-Kuhn-Tucker (KKT) conditions, the distribution ${f_Y}\left( y \right)$ is derived as in~\eqref{eq:H1_approx}, where $\lambda$ is calculated by solving $\left( {\lambda {{\varphi}} + 1} \right){{\rm{e}}^{ - \lambda {{\varphi}}}} - {\mathbb E}\left[ y \right]{\lambda ^2} = 0$.
\end{proof}
\subsection{Likelihood Ratio Test} \label{subsec:LRT}
With the knowledge of distribution of $Y$ under both $H_0$ and $H_1$, we can write down the likelihood ratio as
\begin{align} \label{eq:LRT} \notag
{\mathop{\rm LRT}\nolimits} (y) &= \frac{{\left. {{f_Y}(y)} \right|{H_1}}}{{\left. {{f_Y}(y)} \right|{H_0}}} = \frac{{\lambda {{\rm{e}}^{ - \lambda \left( {y - {{\varphi}}} \right)}}}}{{\frac{{{{\rm{e}}^{ - \,\,\frac{{y - {{\varphi}}}}{{2{\sigma ^2}}}}}}}{{2{\sigma ^2}\Gamma \left( {\frac{1}{2}} \right)\sqrt {\frac{{y - {{\varphi}}}}{{2{\sigma ^2}}}} }}}}\\
& = 2\Gamma \left( {\frac{1}{2}} \right){\sigma ^2}\lambda {{\rm{e}}^{\left( { - \lambda  + \frac{1}{{2{\sigma ^2}}}} \right)y + \lambda {{\varphi}} - \frac{{{{\varphi}}}}{{2{\sigma ^2}}}}}\sqrt {\frac{{y - {{\varphi}}}}{{2{\sigma ^2}}}}.
\end{align}
Considering the well-known \emph{Neyman-Pearson} (NP) criterion, the decision rule, i.e., ${\delta _{{\rm{NP}}}}$ is
\begin{align} \label{eq:test}
{\delta _{{\rm{NP}}}} = \left\{ \begin{array}{l}
1\,\,\,\,\,\,\,\,{\rm{LRT}}\left( y \right) \ge \eta  \Rightarrow \,\,y \ge {\rm{LR}}{{\rm{T}}^{ - 1}}(\eta ) = \eta'\\
0\,\,\,\,\,\,\,\,{\rm{LRT}}\left( y \right) < \eta  \Rightarrow \,\,y < {\rm{LR}}{{\rm{T}}^{ - 1}}(\eta ) = \eta'.
\end{array} \right.
\end{align}
It is notable that, the NP framework is chosen in order to prevent the imposition of a specific cost to the decision made and priors on the hypotheses. In order to calculate the threshold $\eta '$, we have
\begin{align} \label{eq:eta_prime} \notag
{P_F}\left( \delta_{NP}  \right) = {\beta _{{\rm{th}}}} &\Rightarrow \int\limits_{\eta '}^\infty  {\frac{{{{\rm{e}}^{ - \frac{{y - {{\varphi}}}}{{2{\sigma ^2}}}}}}}{{2{\sigma ^2}\Gamma \left( {\frac{1}{2}} \right)\sqrt {\frac{{y - {{\varphi}}}}{{2{\sigma ^2}}}} }}\,{\rm{d}}y}  = {\beta _{{\rm{th}}}} \\
& \Rightarrow \eta ' = 2{\sigma ^2}{\left( {{{{\mathop{\rm erf}\nolimits} }^{ - 1}}\left( {1 - {\beta _{{\rm{th}}}}} \right)} \right)^2} + {{\varphi}},
\end{align}
where $\beta_{\rm{th}}$ denotes the \emph{significance level} of the test. Having the threshold $\eta '$ in~\eqref{eq:eta_prime}, the detection probability is obtained as
\begin{align} \label{eq:detection_prob}
{P_D}\left( \delta_{NP}  \right) = \int\limits_{\eta '}^\infty  {\lambda {{\rm{e}}^{ - \lambda \left( {y - {{\varphi}}} \right)}}\,{\rm{d}}y}   = {e^{ - \lambda \left( {\eta ' - {{\varphi}}} \right)}}.
\end{align}

\section{Numerical Results} \label{sec:Simulation}
In this section, we present numerical results to determine the performance of the test given the various key deployment parameters. A circular area of radius $R=10$ m is considered. The reference receiver is located at spectral location $f_0=62$ GHz. Moreover, $f_s$ and $f_s$ are set to 58 GHz and 64 GHz, respectively. The pathloss exponent, $\alpha$, and the shape factor of Nakagami distribution, $m$, are set to 2.5 and 3, respectively. Here, the transmitted power of all interfering APs are assumed to be the same and set to 27 dBm. The beamwidth of the mmWave signals, $2\theta$, is set to 20 degrees. We assume Gaussian PSD for interfering APs and an RC-0 for the matched filter at the reference receiver side. It is worth mentioning that the proposed model and the hypothesis test are not limited to specific power spectral densities or pulse shape choices of the desired and {interferers\textquotesingle} signals.

In Fig.~\ref{fig:LRT_rho}, the area under the LRT curve is shown as a function of distance from the origin (for fixed number of interfering APs). Since, \eqref{eq:H0} is independent of the reference receiver's location; therefore, higher values in Fig.~\ref{fig:LRT_rho} represent the higher values in~\eqref{eq:H1_approx} which means the higher likelihood of being in the interference-limited regime. When the density of blockages increase, more interfering APs are blocked. Therefore, there is less number of interfering APs that introduce interference to the reference receiver and the probability of being in the interference regime decreases. In addition, it can be seen that the probability of experiencing the interference-limited regime decreases as the reference receiver moves from the center of the area to its periphery. The same trend can be observed as the number of interfering APs changes (with the fixed blockage density), as shown in Fig.~\ref{fig:LRT_N}. The scenarios in which the effect of the presence of blockages is not considered is also provided in Fig.~\ref{fig:LRT_N}. Here, we can see how ignoring the blockage effect results in an overestimation in the likelihood of observing an interference-limited regime.

In Fig.~\ref{fig:ROC_N}, ROC curve is shown for different number of interfering APs and blockage density $\rho$ set to $1$. Here, the detection probability represents the probability of detecting an interference-limited regime for a specific set of deployment parameters.

As we can see in the results, the derived distributions are functions of various key deployment parameters including access point density, blockage density, transmit power, bandwidth and antenna beamwidth. Using the binary hypothesis test in~\eqref{eq:test} we can decide, given a specific set of deployment parameters for the network, which regime is more probable for receivers located at different locations in the finite-sized network.

\begin{figure}[t]
\centering
\includegraphics[height=4.5cm,width=7cm]{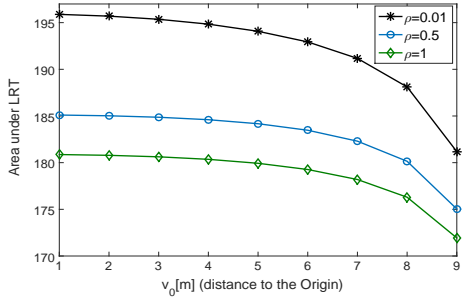}
\caption{LRT versus reference receiver location for different $\rho$, $N{=}200$.}
\label{fig:LRT_rho}
\vspace{-0.6cm}
\end{figure}
\begin{figure}[t]
\centering
\includegraphics[height=4.5cm,width=7cm]{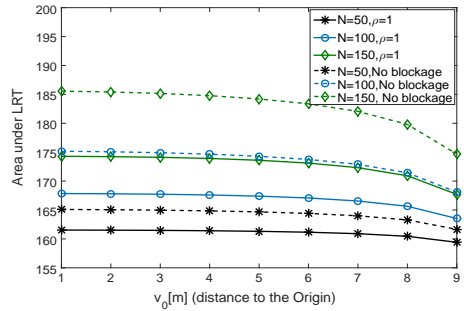}
\caption{LRT versus reference receiver location for different N values.}
\label{fig:LRT_N}
\vspace{-0.6cm}
\end{figure}
\begin{figure}[t]
\centering
\includegraphics[height=4.5cm,width=6.8cm]{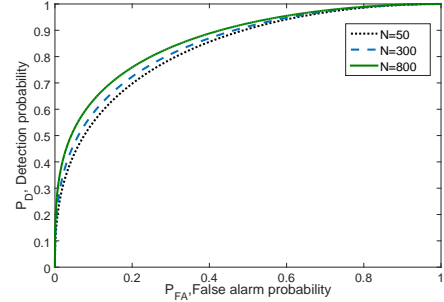}
\caption{ROC curve for different N values, $\rho{=}1$.}
\label{fig:ROC_N}
\vspace{-0.6cm}
\end{figure}
\vspace{-0.3cm}
\section{Conclusion} \label{sec:Conclusion}
In this paper, we set up a binary hypothesis test based on the received signal power in order to detect the regime of the mmWave networks in the presence of the blockages. We derive the distributions of the signal-plus-noise and signal-plus-interference powers, i.e., the power distributions in the case of null and alternate hypotheses of the binary test, respectively. Due to the complexity of the derived distribution under alternate hypothesis and in order to make the problem tractable, we leverage the method of maximum entropy to approximate the distribution. Using the approximated distribution and deploying the Neyman-Pearson criterion, we calculate the probability of experiencing an interference-limited regime. It is worth reiterating that the detailed statistical interference model and its relation to various key deployment parameters including access point density, blockage density, transmit power, bandwidth and antenna pattern helps provide an accurate assessment of the network regimes at different locations in the network.

\bibliographystyle{IEEEtran}

\bibliography{IEEEabrv,GBbibfile}

\end{document}